\newtheorem{observation}[theorem]{Observation}
\begin{document}

\title{Completely Positive formulation of the Graph Isomorphism Problem}
\author{Shashank K Mehta \and Pawan Aurora }
\institute{Indian Institute of Technology, Kanpur - 208016, India\\
\email{skmehta@cse.iitk.ac.in}, paurora@cse.iitk.ac.in}
\tocauthor{Shashank K Mehta (Indian Institute of Technology, Kanpur) }

\maketitle

\begin{abstract}
Given two graphs $G_1$ and $G_2$ on $n$ vertices each, we define a graph $G$ on vertex set $V_1\times V_2$
and the edge set as the union of edges of $G_1\times \bar{G_2}$, $\bar{G_1}\times G_2$, 
$\{(v,u'),(v,u''))(|u',u''\in V_2\}$ for each $v\in V_1$, and $\{((u',v),(u'',v))|u',u''\in V_1\}$ 
for each $v\in V_2$.
We consider the completely-positive Lov\'asz $\vartheta$ function, i.e., $cp\vartheta$ function for $G$. 
We show that the function evaluates to $n$ whenever $G_1$ and $G_2$ are isomorphic and to less than
$n-1/(4n^4)$ when non-isomorphic. Hence this function provides a test
for graph isomorphism. We also provide some geometric insight into the feasible 
region of the completely positive program.
\end{abstract}

\section{Introduction}
Let $G_1=(V,E_1),G_2=(V,E_2)$ be two simple undirected graphs, where $V$ is the set of 
vertices of cardinality $n$ and $E_1,E_2$ are the respective sets of edges. $G_1$ and $G_2$ 
are called isomorphic if there exists a bijection $\sigma:V\rightarrow V$ such that 
$(\sigma(x),\sigma(y))\in E_2$ if and only if $(x,y)\in E_1$. 
 The graph isomorphism problem (GI) is the problem of determining if $G_1$ and $G_2$ are 
isomorphic. This problem although clearly in class \textbf{NP}, has not been known to be 
either in \textbf{P} or \textbf{NP-Complete} \cite{GJ}, except for certain graphs where it is known to have polynomial complexity \cite{BGM,Bod,FM,HW,Luks,Miller}.

There has been evidence suggesting that GI is not likely to be \textbf{NP-Complete}. One of them being that its counting version is reducible to its decision version \cite{Mathon}. Moreover, if the problem were \textbf{NP-Complete}, then the polynomial time hierarchy would collapse to its second level \cite{Babai,BHZ,Schon}. A lot of research has therefore gone into determining the largest complexity class for which it can be shown that GI is hard \cite{JKMT,Toran}. The largest complexity class known to be reducible to GI is \textbf{DET} \cite{Toran}. The complexity aspects of GI are treated in much detail in \cite{AT,KST}.

Apart from the obvious theoretical importance of determining its computational complexity, the graph isomorphism problem finds such diverse applications as chemical identification \cite{Sussen}, scene analysis \cite{ABBBP} and construction and enumeration of combinatorial configurations \cite{CM}.

Several approaches to solve GI in polynomial time have been adopted. Among them is an approach to
incrementally build an isomorphism between the graphs, \cite{RND}. Another approach has been 
to find a canonical labeling of the vertices of the two graphs, \cite{BES,BK,McKay}. A comprehensive 
list of all the approaches is difficult to present here. There are some survey papers on the work
published on this problem, such as \cite{Fortin}.

It was conjectured in \cite{Ramana} that the graph isomorphism problem can be reduced to 
a semidefinite feasibility problem. We make an attempt in this direction.

Given two graphs $G_1$ and $G_2$ on $n$ vertices each, we consider the Lov\'asz $\vartheta$ 
function, for an $n^2$ vertex graph based on the two input graphs, with positive semidefinite
condition replaced by completely positivity condition.
We show that if the graphs are isomorphic then the function evaluates to $n$ and if 
non-isomorphic then it evaluates to a value less than $n-1/(4n^4)$. Hence this provides a test for GI.

\section{Preliminaries}

\subsection{Positive Semidefinite Matrices}

An $m\times m$ symmetric matrix $M$ is said to be {\em positive semidefinite} if it can be
expressed as $Q\cdot Q^T$ for some $m\times k$ matrix $Q$. If the row vectors of $Q$ 
are $v_1,\dots,v_m$, then we will call this set a vector-realization of $M$ in $k$-dimensional
space. We will denote the corresponding matrix $M$ by ${\bf M}(v_1,\dots,v_m)$.
It is easy to see that there is always a vector realization in $k= rank(M)$ dimensional
space.

If all entries of a positive semidefinite matrix $M$ are non-negative, then it is called
a {\em doubly-non-negative} (DN) matrix. Further, if $M$ has a vector realization in which 
each component of each vector is non-negative (i.e., $M$ has a decomposition $Q\cdot Q^T$
where each entry of $Q$ is non-negative), then $M$ is called a {\em completely positive}
(CP) matrix. We will call it a non-negative vector realization of the CP matrix $M$.
Every principal submatrix of a DN (resp. CP) matrix is DN (resp. CP).
It is easy to see that every CP matrix is a DN matrix.

It is not necessary that every decomposition $Q\cdot Q^T$ of a CP matrix
has all non-negative entries in $Q$. The smallest $k$, for which such an $m\times k$ matrix 
exists, is called the {\em cp-rank} of $M$. 

\begin{theorem}[\cite{Hannah}] \label{thm1}
For any CP matrix $M$ of rank $r$, $cp$-$rank(M) \leq r(r+1)/2$.
\end{theorem}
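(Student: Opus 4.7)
The plan is to start from any non-negative decomposition $M = \sum_{j=1}^N b_j b_j^T$, $b_j \in \mathbb{R}_{\geq 0}^n$ (which exists since $M$ is CP), and then use a conic Carath\'eodory argument to cut the number of terms down to $r(r+1)/2$. The key observation is to identify the correct low-dimensional ambient space in which the rank-one matrices $b_j b_j^T$ live.

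First I would verify that every $b_j$ lies in the column space $\mathrm{col}(M)$, which has dimension $r$. If $x\in\ker(M)$, then $x^T M x = \sum_j (b_j^T x)^2 = 0$, so $b_j^T x = 0$ for every $j$; hence $\mathrm{span}\{b_j\} \perp \ker(M)$, and by the symmetry of $M$ this means $\mathrm{span}\{b_j\}\subseteq \mathrm{col}(M)$.

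Next, view each $b_j b_j^T$ as a point in the real vector space $\mathcal{S}_M$ of symmetric $n\times n$ matrices whose row and column space is contained in $\mathrm{col}(M)$. Fixing any orthonormal basis for $\mathrm{col}(M)$, this space is isomorphic to the space of $r\times r$ symmetric matrices, so $\dim \mathcal{S}_M = r(r+1)/2$. The equality $M=\sum_j b_j b_j^T$ exhibits $M$ as a point in the conic hull of the set $\{b_j b_j^T : 1\leq j\leq N\}\subseteq \mathcal{S}_M$.

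Finally, I would invoke the conic form of Carath\'eodory's theorem: any element of the conic hull of a subset of a $d$-dimensional real vector space can be written as a non-negative combination of at most $d$ of those points. Applied here, this yields indices $i_1,\dots,i_k$ with $k\leq r(r+1)/2$ and scalars $\lambda_j\geq 0$ such that $M=\sum_{j=1}^{k}\lambda_j\, b_{i_j}b_{i_j}^T$. Absorbing $\sqrt{\lambda_j}$ into $b_{i_j}$ keeps the vectors componentwise non-negative and gives a CP-decomposition of $M$ with at most $r(r+1)/2$ terms, as claimed. The only real content of the argument is the correct identification of $\mathcal{S}_M$ and its dimension; the Carath\'eodory reduction and the rescaling are routine, so I would not expect any serious obstacle.
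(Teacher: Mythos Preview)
Your argument is correct and is in fact the standard proof of this bound. Note, however, that the paper does not prove Theorem~\ref{thm1} at all: it is quoted from the reference \cite{Hannah} and used as a black box (only to ensure that the cp-rank $N$ satisfies $N<n^4$ in the proof of Lemma~\ref{lem7}). So there is no ``paper's own proof'' to compare against; what you wrote is essentially the original Hannah--Laffey argument, namely: the generators $b_j$ of any CP decomposition lie in $\mathrm{col}(M)$, the rank-one terms $b_jb_j^{T}$ therefore live in the $\binom{r+1}{2}$-dimensional space of symmetric matrices supported on $\mathrm{col}(M)$, and conic Carath\'eodory trims the decomposition to at most that many terms while preserving non-negativity of the vectors.
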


A geometrical view of a CP matrix is that if $v_1,\dots,v_m$ is a non-negative vector realization
of it, then these vectors belong to the {\em closed positive orthant} (`closed' in the sense of a 
polyhedron) of some orthogonal basis of the space.

\subsection{United Vectors}

Let $w$ be any fixed unit vector. Then for every unit vector $v$, we call $u=(w+v)/2$
a {\em united vector with respect to $w$}. We will drop the reference to $w$ when it is
unambiguous.

\begin{observation}\label{obs1} With respect to a fixed unit vector $w$,\\
(i) a vector $u$ is united if and only if $u\cdot w = u^2$,\\
(ii) if $u_1$ and $u_2$ are mutually orthogonal united vectors,
then $u_1+u_2$ is also a united vector.\\
(iii) let $u_1,\dots,u_k$ be a set of pairwise orthogonal united vectors.
This set is maximal (i.e., no new united vector
can be added to it while preserving pairwise orthogonality) if and only if $w$ belongs
to the subspace spanned by these vectors if and only if $\sum_i u_i = w$.\\
(iv) for any collection of pairwise orthogonal united vectors $u_1,\dots, u_j$,
$w\cdot \sum_i u_i = \sum_i u_i^2 \leq 1$. Further, $\sum_i u_i^2 = 1$ if and only if 
the set is maximal, i.e., $\sum_i u_i = w$.
\end{observation}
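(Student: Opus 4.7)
The plan is to push everything through the single algebraic identity in part (i), so I would start there. The forward direction is a direct expansion: if $u = (w+v)/2$ with $v$ a unit vector, then $u \cdot w = (1 + v \cdot w)/2$ while $u^2 = (2 + 2 v \cdot w)/4$, and the two agree. For the converse I would set $v = 2u - w$ and compute $|v|^2 = 4u^2 - 4 u \cdot w + 1$, which collapses to $1$ exactly when $u \cdot w = u^2$; so $u = (w+v)/2$ with $v$ a unit vector, i.e.\ $u$ is united. Part (ii) is then immediate: for orthogonal united $u_1, u_2$, pairwise orthogonality gives $(u_1+u_2)\cdot w = u_1^2 + u_2^2 = (u_1+u_2)^2$, and (i) applies.

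For part (iii), the key observation is that by (i) the identity $u_j \cdot w = u_j^2$ says the orthogonal projection of $w$ onto the line through $u_j$ equals $u_j$ itself. Hence, when the $u_j$ are pairwise orthogonal, the projection of $w$ onto their span is exactly $\sum_j u_j$, and I would decompose $w = \sum_j u_j + q$ with $q$ orthogonal to every $u_j$. A quick check using (i) shows $q$ is itself united: $q \cdot w = q \cdot \sum_j u_j + q^2 = q^2$. So whenever $\sum_j u_j \neq w$, the nonzero united vector $q$ is orthogonal to every $u_j$, breaking maximality; conversely, if $\sum_j u_j = w$, then any united $u$ orthogonal to all $u_j$ has $u^2 = u \cdot w = 0$ and so $u = 0$. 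This chains the three stated equivalences.

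Part (iv) follows by combining (i) with Cauchy--Schwarz. Using (i) termwise and pairwise orthogonality, $w \cdot \sum_i u_i = \sum_i u_i^2 = \bigl(\sum_i u_i\bigr)^2$, while Cauchy--Schwarz gives $w \cdot \sum_i u_i \leq |\sum_i u_i|$. Hence $|\sum_i u_i|^2 \leq |\sum_i u_i|$, so $\sum_i u_i^2 \leq 1$, with equality iff $|\sum_i u_i| = 1$; combined with $w \cdot \sum_i u_i = 1$, Cauchy--Schwarz equality (both sides unit) then forces $\sum_i u_i = w$, which by (iii) is precisely maximality.

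I do not anticipate any real obstacle: the whole observation unwinds from (i). The only mildly delicate step is the verification in (iii) that the orthogonal residual $q = w - \sum_j u_j$ is again united, and even there one only needs the Pythagorean identity $q \perp \sum_j u_j$ together with (i). The equality analysis in (iv) is the other spot to be careful, but it reduces to standard Cauchy--Schwarz together with (iii).
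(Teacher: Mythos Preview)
Your argument is correct in every part: the characterization in (i) via $v=2u-w$, the closure under orthogonal sums in (ii), the residual $q=w-\sum_j u_j$ being united in (iii), and the Cauchy--Schwarz equality analysis in (iv) all go through as you describe. The paper itself states this result as an observation without proof, so there is no argument to compare against; your route through the algebraic criterion (i) is the natural one and exactly what the authors presumably had in mind.
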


\begin{lemma}\label{lem1} Let $u_1,\dots, u_k$ be pairwise orthogonal united vectors. Then
matrices ${\bf M}(u_1,\dots,u_k)$ and ${\bf M}(u_1,\dots,u_k,w)$ are CP.
\end{lemma}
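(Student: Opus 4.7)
The plan is to exhibit explicit non-negative vector realizations of the two Gram matrices in question.

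The matrix $\mathbf{M}(u_1,\dots,u_k)$ is the easier of the two. Pairwise orthogonality makes it diagonal with non-negative diagonal entries $\|u_i\|^2$, so setting $u_i' = \|u_i\|\, e_i \in \mathbb{R}^k$, where the $e_i$ are the standard basis vectors, immediately gives a non-negative vector realization and hence a CP decomposition.

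For $\mathbf{M}(u_1,\dots,u_k,w)$ the idea is to first extend the given pairwise orthogonal family of united vectors to a maximal one $\{u_1,\dots,u_m\}$. Repeated application of Observation~\ref{obs1}(iii) guarantees such an extension (dimension bounds the number of additions), and maximality then supplies the crucial identities $\sum_{i=1}^m u_i = w$ and $\sum_{i=1}^m \|u_i\|^2 = 1$ from parts (iii) and (iv). In $\mathbb{R}^m$ define $u_i' = \|u_i\|\, e_i$ for $1 \le i \le k$ and $w' = (\|u_1\|,\|u_2\|,\dots,\|u_m\|)$. Every coordinate is non-negative. A direct computation, using $u_i\cdot w = \|u_i\|^2$ from Observation~\ref{obs1}(i) and $\|w\|^2=1=\sum_j\|u_j\|^2$, shows that the Gram matrix of $\{u_1',\dots,u_k',w'\}$ agrees with $\mathbf{M}(u_1,\dots,u_k,w)$ entry by entry, so this matrix is CP as well.

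The main conceptual step, and the one I expect to take a moment to justify cleanly, is the passage to the maximal extension: this is exactly what lets us package $w$ as the tuple of norms $(\|u_1\|,\dots,\|u_m\|)$ in the enlarged coordinate system while keeping the $u_i$'s supported on distinct axes. Once the extension is in hand, all remaining verifications are routine consequences of the characterizations of united vectors collected in Observation~\ref{obs1}.
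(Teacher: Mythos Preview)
Your argument is correct and follows the same idea as the paper's: place each $u_i$ along its own coordinate axis with length $\|u_i\|$ and give $w$ the coordinate $\|u_i\|$ on the $i$-th axis. The paper bypasses your extension-to-maximal step by working directly in $(k+1)$ dimensions and setting the single extra coordinate of $w$ to $\sqrt{1-\sum_i\|u_i\|^2}$, which is non-negative by Observation~\ref{obs1}(iv).
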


\begin{proof} Suppose $u_i\cdot w=a_i$ for $i=1,\dots,k$. Then there exists a coordinate
system in $(k+1)$-dimensional space in which $u_1=(\sqrt{a_1},0,\dots,0)$, $u_2=(0,\sqrt{a_2},0,\dots,0)$
so on, and $w=(\sqrt{a_1},\sqrt{a_2},\dots,\sqrt{a_k},\sqrt{1-\sum_ia_i})$. 
Since all entries are non-negative reals, the claim is established. $\Box$
\end{proof}

\section{Lov\'asz Theta Function \cite{LTF}}
Given two graphs, each on $n$ vertices, $G_1=([n],E_1)$ and $G_2=([n],E_2)$. consider 
the semidefinite program SDP-LT given below, which computes a $\vartheta$-function. The
variable matrix $Y$ is of size $(n^2+1)\times (n^2+1)$ with index set $\{ij | i,j\in [n]\}
\cup \{\omega\}$.

\begin{alignat}{2}
    \text{SDP-LT:}\quad\text{maximize } \quad & \sum_{i,j\in [n]}Y_{ij,ij}\  \notag\\
    \text{subject to} \quad & Y\succeq 0 \ \\
                       & Y_{ij,kl}\geq 0\ &,\ & 1\leq i,j,k,l\leq n\\
                       & Y_{\omega,\omega}=1 \ \\
                       & Y_{ij,\omega}=Y_{ij,ij}\ &,\ & 1\leq i,j\leq n\\
                       & Y_{ij,ik}=0\ &,\ & 1\leq i,j,k\leq n,\ j\neq k\\
                       & Y_{ji,ki}=0\ &,\ & 1\leq i,j,k\leq n,\ j\neq k\\
                       & Y_{ij,kl}=0\ &,\ & (i,k)\in E_1 ,\ (j,l)\notin E_2\\
                       & Y_{ij,kl}=0\ &,\ & (i,k)\notin E_1 ,\ (j,l)\in E_2
  \end{alignat}

Let $Y$ be a solution of SDP-LT and let $\{u_{ij}|i,j\in[n]\} \cup \{w\}$ be a vector realization
of $Y$. Then from conditions (3) and (4) every $u_{ij}$ is a united vector with respect to the
unit vector $w$. Conditions (1) and (2) ensure that $Y$ is DN.

Every solution matrix $Y$ of SDP-LT is $(n^2+1)\times (n^2+1)$ in size
in which the last row and the last column are same as the diagonal. Hence from here onwards
we will drop the last row and the last column and assume that $Y$ is an $n^2\times n^2$ matrix.

Let $\{u_{ij}|i,j\in [n]\}\cup \{w\}$ be a vector realization of any solution $Y$ of SDP-LT.
Consider the matrix 
\[ W=\left( \begin{array}{cccc}
u_{11} & u_{12} & \dots & u_{1n}\\
u_{21} & u_{22} & \dots & u_{2n}\\
\vdots  & \vdots  & \ddots & \vdots\\
u_{n1} & u_{n2} & \dots & u_{nn} \end{array} \right)\] 
Each row and each column of this matrix is a set of pairwise orthogonal united vectors 
with respect to $w$. Hence from Observation \ref{obs1}(iv) the value of the objective function of SDP-LT, 
$\sum_{i,j\in[n]}u_{ij}^2$, is at most $n$.
\vspace{1mm}

\noindent
{\bf Remark:} The graph for which SDP-LT is a Lov\'asz $\vartheta$ function, has a clique cover of 
size $n$ (the edges of condition (5)). Hence this also establishes that the function value 
is bounded above by $n$.

\section{The Case of Isomorphic Graphs}

Let us continue to assume that $\{u_{ij}|i,j\in [n]\}\cup \{w\}$ is a vector realization 
of an arbitrary solution $Y$ of SDP-LT. Suppose there exists any set of $n$
vectors $\{u_{i_1j_1},\dots, u_{i_nj_n}\}$ in which each pair has positive inner product.
We will call such a set a {\em complete consistent set}. Observe that
both $i_1,\dots,i_n$ and $j_1,\dots,j_n$ are permutations of $1,2,\dots,n$. Hence we can
rearrange them as $\{u_{1\sigma(1)},u_{2\sigma(2)},\dots,\\
 u_{n\sigma(n)}\}$. It is easy to see that
in this case $\sigma$ is an isomorphism between $G_1$ and $G_2$.

Consider the case when $\sigma$ is an isomorphism between $G_1$ and $G_2$. Consider a 
special solution of SDP-LT for this case: Let $w_0$ be some constant unit vector. Define
$w=w_0$ and $u_{i\sigma(i)} = w$ for all $i$ and $u_{ij}=0$
whenever $j\neq \sigma(i)$. In this case the matrix $Y$ is $P^{[2]}_{\sigma}$ defined below.

\begin{definition} For any permutation $\sigma \in S_n$ (the symmetric group), 
the $n^2\times n^2$ matrix $P^{[2]}_{\sigma}$
is defined by $[P^{[2]}_{\sigma}]_{ij,kl} = [P_{\sigma}]_{ij}\cdot [P_{\sigma}]_{kl}$, 
where $P_{\sigma}$ denotes the permutation matrix of $\sigma$. 
The convex hull of $\{P^{[2]}_{\sigma}| \sigma \textrm{ is a } G_1,G_2 
\textrm{ isomorphism}\}$ will be denoted by ${\cal P}_{G_1,G_2}$.
\end{definition}

\begin{observation}\label{obs2} By construction $P^{[2]}_{\sigma}$ matrices are rank-1 positive
semidefinite matrices. Since all entries of $P_{\sigma}$ are non-negative, $P^{[2]}_{\sigma}$ are CP.
\end{observation}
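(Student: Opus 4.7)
The plan is to exhibit a single explicit non-negative rank-one factorization and read off both conclusions at once. Introduce the $n^2$-vector $p$ indexed by the pairs $(i,j)$ and defined by $p_{ij} = [P_\sigma]_{ij}$, i.e., $p_{ij}=1$ if $j=\sigma(i)$ and $0$ otherwise. This is just the vectorization of the permutation matrix $P_\sigma$.

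The definition $[P^{[2]}_\sigma]_{ij,kl} = [P_\sigma]_{ij}\cdot [P_\sigma]_{kl}$ is precisely $[pp^T]_{ij,kl}$, so $P^{[2]}_\sigma = p p^T$. From this one identity everything follows. First, $p p^T$ is positive semidefinite of rank at most one; since $p$ has exactly $n$ nonzero entries it is nonzero, so the rank is exactly one.

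Second, since $P_\sigma$ is a $0/1$ matrix the vector $p$ lies in the non-negative orthant. Thus the factorization $P^{[2]}_\sigma = p p^T$ is itself a non-negative factorization (in the language of the paper, a non-negative vector realization in one dimension using the single row vector $p$). By the definition of a completely positive matrix recalled in Section~2.1, this exhibits $P^{[2]}_\sigma$ as CP.

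There is essentially no obstacle: the content of the observation is the single recognition that the definition of $P^{[2]}_\sigma$ as an outer product $p p^T$ of the vectorized permutation matrix already packages both claims. The only minor care is to note that $p\neq 0$ to conclude the rank is one rather than zero, and that entries of $P_\sigma$ being non-negative is exactly what makes the one-dimensional factorization non-negative.
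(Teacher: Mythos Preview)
Your proof is correct and matches the paper's approach: the observation is stated without a separate proof, but the argument implicit in it (and made explicit in the proof of Lemma~\ref{lem5}) is exactly to vectorize $P_\sigma$ into a single non-negative vector $p$ and recognize $P^{[2]}_\sigma = p\,p^T$.
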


If $P^{[2]}_{\sigma}$ is a solution  of SDP-LT, then $\sigma$ is an isomorphism because
$(P^{[2]}_{\sigma})_{i\sigma(i),j\sigma(j)}\\
 = 1$ for all $i,j$.
Above discussion leads to the following lemma.

\begin{lemma}\label{lem3} $P^{[2]}_{\sigma}$ is a CP solution of SDP-LT if and only if $\sigma$ is 
an isomorphism between $G_1$ and $G_2$.
\end{lemma}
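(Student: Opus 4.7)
The plan is to prove both directions by directly checking constraints, since the lemma really just packages up observations already made in the text.

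For the forward direction, assume $P^{[2]}_{\sigma}$ is a CP solution of SDP-LT. The key entry-level fact is that $(P^{[2]}_{\sigma})_{i\sigma(i),k\sigma(k)} = [P_{\sigma}]_{i\sigma(i)}\cdot [P_{\sigma}]_{k\sigma(k)} = 1$ for every $i,k$. Now suppose $(i,k)\in E_1$. Applying feasibility condition (7) with $(j,l)=(\sigma(i),\sigma(k))$, the entry $(P^{[2]}_{\sigma})_{i\sigma(i),k\sigma(k)}$ would be forced to $0$ if $(\sigma(i),\sigma(k))\notin E_2$, a contradiction; so $(\sigma(i),\sigma(k))\in E_2$. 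Symmetrically, condition (8) forces $(i,k)\notin E_1 \Rightarrow (\sigma(i),\sigma(k))\notin E_2$. Thus $\sigma$ preserves adjacency in both directions and is an isomorphism.

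For the reverse direction, assume $\sigma$ is a $G_1,G_2$ isomorphism and check that $Y := P^{[2]}_{\sigma}$ satisfies every constraint of SDP-LT and is CP. The CP and PSD conditions (1) are handed to us by Observation \ref{obs2}, and non-negativity (2) is immediate because $P_{\sigma}$ is $0/1$-valued. The $\omega$-row/column conditions (3),(4) are automatic in the reduced $n^2\times n^2$ form, since the discussion fixes $w=w_0$ and $u_{i\sigma(i)}=w$, which after dropping the last row/column leaves exactly $P^{[2]}_{\sigma}$. For the orthogonality constraints (5),(6): $Y_{ij,ik}=[P_{\sigma}]_{ij}[P_{\sigma}]_{ik}$, and since $\sigma$ is a function there is a unique $j=\sigma(i)$ with $[P_{\sigma}]_{ij}=1$, so for $j\neq k$ at least one factor vanishes; (6) is the same argument with $\sigma^{-1}$ in the column index.

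The edge constraints (7),(8) are where the hypothesis that $\sigma$ is an isomorphism is actually used. A product $[P_{\sigma}]_{ij}[P_{\sigma}]_{kl}$ is nonzero only when $j=\sigma(i)$ and $l=\sigma(k)$; in that case $(i,k)\in E_1 \Leftrightarrow (\sigma(i),\sigma(k))=(j,l)\in E_2$ by the isomorphism property, so neither of the forbidden combinations in (7) and (8) can produce a nonzero entry. This exhausts the constraints, so $P^{[2]}_{\sigma}$ is feasible, and it is CP by Observation \ref{obs2}. Neither direction presents a real obstacle; the only thing to be careful about is being explicit that the $\omega$-coordinates have been absorbed into the reduced $n^2\times n^2$ formulation, so that conditions (3) and (4) do not need separate verification.
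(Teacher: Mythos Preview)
Your proof is correct and follows essentially the same route as the paper: the paper's ``proof'' is the discussion immediately preceding the lemma, which (i) uses the nonvanishing of $(P^{[2]}_{\sigma})_{i\sigma(i),k\sigma(k)}$ together with conditions~(7)--(8) for the forward direction, and (ii) invokes Observation~\ref{obs2} and the explicit vector realization $u_{i\sigma(i)}=w$, $u_{ij}=0$ otherwise, for the reverse direction. One small point worth tightening: Observation~\ref{obs2} only asserts that the $n^2\times n^2$ block is CP, whereas feasibility in SDP-LT requires the extended $(n^2+1)\times(n^2+1)$ matrix to be CP; you implicitly cover this via the realization $u_{i\sigma(i)}=w_0$, but it is worth stating explicitly (the paper itself defers this check to the proof of Lemma~\ref{lem5}).
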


The value of the objective function for $Y=P^{[2]}_{\sigma}$ is 
$\sum_{ij}[P^{[2]}_{\sigma}]_{i\sigma(i),i\sigma(i)} = n$. From the last statement
of the previous section we have the following result.

\begin{lemma}\label{lem4} The value of the objective function of SDP-LT is less than or equal to $n$. 
It reaches its maximum value $n$ when $G_1$ and $G_2$ are isomorphic.
\end{lemma}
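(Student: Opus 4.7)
My plan is to observe that Lemma \ref{lem4} is essentially a packaging of facts already proven in the preceding text, so the proof reduces to citing them in the right order. The two halves of the statement correspond to an upper bound (established generically for any feasible $Y$) and an attainment claim (established via the special solutions $P^{[2]}_\sigma$).

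For the upper bound, I would take an arbitrary feasible $Y$ of SDP-LT and a vector realization $\{u_{ij}\}\cup\{w\}$, then invoke the matrix $W$ of vectors introduced just above the lemma. Constraint (5) says $u_{ij}\cdot u_{ik}=0$ for $j\neq k$, so for each fixed $i$ the vectors $u_{i1},\dots,u_{in}$ form a set of pairwise orthogonal united vectors. Observation \ref{obs1}(iv) then yields $\sum_{j=1}^n u_{ij}^2 \leq 1$ for each $i$, and summing over $i\in[n]$ gives $\sum_{i,j}u_{ij}^2 \leq n$, which is exactly the objective value.

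For the attainment, I would suppose $G_1$ and $G_2$ are isomorphic via some $\sigma$ and consider the matrix $Y=P^{[2]}_\sigma$. Lemma \ref{lem3} already establishes that this $Y$ is a CP (hence DN, hence feasible) solution of SDP-LT. It then suffices to evaluate the objective: by the definition of $P^{[2]}_\sigma$, the diagonal entry $[P^{[2]}_\sigma]_{ij,ij}=[P_\sigma]_{ij}^2$ equals $1$ when $j=\sigma(i)$ and $0$ otherwise. Thus $\sum_{i,j}[P^{[2]}_\sigma]_{ij,ij} = |\{i:j=\sigma(i)\}|=n$.

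Combining the two halves gives the lemma. There is no genuine obstacle here; the only thing to be careful about is not to double-count or to accidentally rely on the (stronger) converse direction that the maximum is strictly less than $n$ when $G_1,G_2$ are non-isomorphic --- that sharper separation is a later and considerably harder result, whereas Lemma \ref{lem4} only asserts the bound and its achievability in the isomorphic case.
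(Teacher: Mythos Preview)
Your proposal is correct and mirrors the paper's own argument essentially step for step: the paper does not give a separate proof of Lemma~\ref{lem4} but simply points to the end of Section~3 (Observation~\ref{obs1}(iv) applied row-wise to $W$) for the bound and to the computation $\sum_{ij}[P^{[2]}_\sigma]_{ij,ij}=n$ together with Lemma~\ref{lem3} for attainment. Your write-up just makes these references explicit.
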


\begin{lemma}\label{lem5} Let $Y=\sum_{\sigma \in I} a_{\sigma} P^{[2]}_{\sigma}$ is a 
solution of SDP-LT, where $a_{\sigma}>0$ for each $\sigma \in I$ and $\sum_{\sigma\in I}a_{\sigma}=1$. 
Then $P^{[2]}_{\sigma}$ is a solution of SDP-LT for each $\sigma \in I$.
\end{lemma}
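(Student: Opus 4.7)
The plan is to fix an arbitrary $\tau \in I$ and verify each constraint of SDP-LT for $P^{[2]}_\tau$ individually.

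First I would dispatch the "non-equality" conditions. By Observation~\ref{obs2}, $P^{[2]}_\tau$ is CP, so it is PSD with non-negative entries, giving constraints (1) and (2) for free. Constraints (3) and (4), which involve the $\omega$ row/column of the original $(n^2+1) \times (n^2+1)$ formulation, are intrinsic to any $P^{[2]}_\sigma$: the entries of the permutation matrix $P_\sigma$ are $0$ or $1$, so $[P_\sigma]_{ij}^2 = [P_\sigma]_{ij}$, which instantly yields $[P^{[2]}_\sigma]_{ij,\omega} = [P^{[2]}_\sigma]_{ij,ij}$ and $[P^{[2]}_\sigma]_{\omega,\omega} = 1$.

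The heart of the argument is the remaining zero-equality constraints (5)–(8). Each of these asserts that some specific entry of the solution matrix equals $0$. For $Y = \sum_{\sigma \in I} a_\sigma P^{[2]}_\sigma$, every entry decomposes as
\[
Y_{ij,kl} = \sum_{\sigma \in I} a_\sigma [P^{[2]}_\sigma]_{ij,kl}.
\]
Since $a_\sigma > 0$ for each $\sigma \in I$ and each summand $[P^{[2]}_\sigma]_{ij,kl}$ is non-negative (again by Observation~\ref{obs2}), the vanishing of the left side forces every summand to vanish. Hence $[P^{[2]}_\tau]_{ij,kl} = 0$ for every entry mandated to be zero by (5)–(8), so $P^{[2]}_\tau$ satisfies these constraints as well. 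Combining this with the first paragraph shows that $P^{[2]}_\tau$ is a feasible solution of SDP-LT; invoking Lemma~\ref{lem3} (optional, if one wants to conclude that each such $\tau$ is an isomorphism) completes the picture.

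There is no real obstacle here: the lemma is essentially a face-exposedness statement. The equality constraints defining the feasible region of SDP-LT cut out a face of the non-negative/PSD cone via vanishing of non-negative quantities, and such faces are automatically closed under decomposition of convex combinations with strictly positive weights. The only care needed is bookkeeping — making sure every constraint of SDP-LT is accounted for, and correctly handling the $\omega$ coordinate that was dropped from the presentation after the first page.
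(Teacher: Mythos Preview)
Your argument is correct and matches the paper's proof essentially step for step: (1)--(2) from complete positivity of $P^{[2]}_\tau$, (3)--(4) as automatic for any $P^{[2]}_\sigma$, and (5)--(8) because a sum of non-negative terms with strictly positive weights vanishes only if every term does. The only cosmetic difference is that the paper exhibits the explicit rank-one factorisation $Q\cdot Q^T$ for the extended matrix rather than citing Observation~\ref{obs2}, and your optional appeal to Lemma~\ref{lem3} is not needed for the statement as written.
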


\begin{proof} When $P^{[2]}_{\sigma}$ is extended to $(n^2+1)\times (n^2+1)$, then conditions
(3) and (4) of SDP-LT are trivially satisfied. The extended matrix is equal to $Q\cdot Q^T$ where
$(n^2+1) \times 1$ matrix $Q$ has first $n^2$ entries same as those of $P_{\sigma}$ (i.e., 
first $n^2$ entries is the vectorized $P_{\sigma}$) and the last entry is $1$. Hence it satisfies
conditions (1) and (2). The last four conditions of the SDP are satisfied by the extended
$P^{[2]}_{\sigma}$ because every zero condition satisfied by $Y$ is also satisfied by $P^{[2]}_{\sigma}$.
Thus $P^{[2]}_{\sigma}$ is a solution for every $\sigma\in I$.
$\Box$
\end{proof}

From now on we consider SDP-LT with conditions (1) and (2) replaced with the condition that 
$Y\in\mathcal{C}^{*}$ where $\mathcal{C}^{*}$ is the cone of Completely Positive matrices. 
Let us call the modified program CP-LT and denote the function by $cp\vartheta$.

Now we present the main result of this section.

\begin{lemma}\label{lem6} The $cp\vartheta$ function value is $n$ if and only if $G_1$ and $G_2$
are isomorphic. Moreover, in this case the feasible region of CP-LT is equal to 
${\cal P}_{G_1G_2}$.
\end{lemma}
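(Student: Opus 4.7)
The plan is to establish the equivalence and the ``moreover'' clause simultaneously by characterizing the set of CP-LT optima of value $n$ as exactly $\mathcal{P}_{G_1,G_2}$. One inclusion is easy: by Lemma \ref{lem3} every $P^{[2]}_\sigma$ with $\sigma$ a $G_1,G_2$ isomorphism is a CP-feasible solution, and a direct computation gives its objective value as $n$. Since the CP cone is convex, the zero-equality constraints of CP-LT are linear, and the objective is linear, every convex combination of such matrices is again a CP-feasible solution achieving value $n$. In particular, whenever an isomorphism exists, $cp\vartheta = n$, and $\mathcal{P}_{G_1,G_2}$ lies in the optimal face.

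For the reverse inclusion, let $Y$ be any CP-feasible solution with objective $n$, and fix a non-negative vector realization $\{u_{ij}\} \cup \{w\}$ exhibiting the united vector structure $u_{ij}\cdot w = u_{ij}^2$ together with $w\cdot w = 1$. Each row of the matrix $W$ consists of pairwise orthogonal united vectors, so Observation \ref{obs1}(iv) gives $\sum_j u_{ij}^2 \leq 1$; the hypothesis that the total equals $n$ forces equality in every row, whence by Observation \ref{obs1}(iii) the row is maximal and $\sum_j u_{ij} = w$. The symmetric argument on columns yields $\sum_i u_{ij} = w$ for each $j$.

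Next I would exploit non-negativity coordinate by coordinate. Writing $w = (w_1,\dots,w_d)$ and $u_{ij} = ((u_{ij})_1,\dots,(u_{ij})_d)$ in an orthonormal basis with all entries non-negative, the orthogonality constraints $u_{ij}\cdot u_{ik} = 0$ (for $j\neq k$) and $u_{ji}\cdot u_{ki} = 0$ (for $j\neq k$) force, in each coordinate $\ell$, the $n\times n$ slice $((u_{ij})_\ell)$ to have at most one non-zero entry in each row and in each column. Combined with the row- and column-sum identities $\sum_j (u_{ij})_\ell = w_\ell = \sum_i (u_{ij})_\ell$, this implies that for every $\ell$ with $w_\ell > 0$ there is a unique permutation $\sigma_\ell \in S_n$ such that $(u_{i,\sigma_\ell(i)})_\ell = w_\ell$ and all other coordinate-$\ell$ entries vanish. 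The edge constraints $Y_{ij,kl}=0$ for edge-mismatched $(i,k,j,l)$ then translate, coordinate by coordinate, into the assertion that each such $\sigma_\ell$ is a $G_1,G_2$ isomorphism.

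Assembling, $Y_{ij,kl} = \sum_\ell (u_{ij})_\ell (u_{kl})_\ell = \sum_\ell w_\ell^2 \, [P^{[2]}_{\sigma_\ell}]_{ij,kl}$, and grouping by permutation yields $Y = \sum_\sigma a_\sigma P^{[2]}_\sigma$ with $a_\sigma = \sum_{\ell:\sigma_\ell = \sigma} w_\ell^2 \geq 0$ summing to $\|w\|^2 = 1$. Hence $Y \in \mathcal{P}_{G_1,G_2}$; any $\sigma_\ell$ with $w_\ell>0$ certifies an isomorphism, and the optimal face is identified with $\mathcal{P}_{G_1,G_2}$. The hard part will be the third paragraph: squeezing out this rigid coordinate-wise permutation structure from the joint use of orthogonality and entrywise non-negativity is exactly where the CP cone (as opposed to the PSD cone) is essential, and some care is needed to ensure that the permutations forced by the row sums coincide with those forced by the column sums and remain consistent with the edge constraints.
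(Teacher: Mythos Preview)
Your proposal is correct and follows essentially the same route as the paper's proof: both exploit a non-negative vector realization, use Observation~\ref{obs1} to force row- and column-sums equal to $w$ when the objective hits $n$, slice the matrix $W$ coordinate by coordinate, and read off a permutation $\sigma_\ell$ from each coordinate with $w_\ell>0$, yielding $Y=\sum_\ell w_\ell^2\,P^{[2]}_{\sigma_\ell}$. The only organizational difference is that you verify directly (via the non-negativity of the coordinate contributions to $Y_{i\sigma_\ell(i),k\sigma_\ell(k)}$) that each $\sigma_\ell$ respects the edge constraints and is therefore an isomorphism, whereas the paper first writes the convex decomposition and then invokes Lemmas~\ref{lem5} and~\ref{lem3} to reach the same conclusion. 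Your worry in the last sentence about reconciling the ``row'' and ``column'' permutations is a non-issue: once each coordinate slice has at most one nonzero per row and per column and all row and column sums equal $w_\ell>0$, the slice is forced to be $w_\ell P_{\sigma_\ell}$ for a single well-defined permutation $\sigma_\ell$.
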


\begin{proof} Consider a non-negative vector realization $\{u_{ij}|i,j\in[n]\}\cup \{w\}$ 
for a CP solution $Y$ and the
corresponding matrix $W$ defined towards the end of Section 3. Since objective function 
attains value $n$, from Observation \ref{obs1} vectors of each row/column form a maximal set of pairwise
orthogonal united vectors. Also from the same Observation each row and each column adds up to $w$.
Assume that the vector realization is in an $N$-dimensional space.
 Consider the $r$-th component of the matrix, i.e., the matrix formed by
the $r$-th component of each vector. Let us denote it by $D_r$.
Each element of $D_r$ is non-negative and each row and each
column adds up to $w_r$, the $r$-th component of $w$. Hence $D_r$ is $w_r$ times a 
doubly-stochastic matrix. But the vectors
of the same row (resp. column) are orthogonal so exactly one entry is non-zero in each row
(resp. column) if $w_r >0$. So $D_r = w_rP_{\sigma_r}$ for some permutation $\sigma_r$.
We can express $W$ by $\sum_r w_rP_{\sigma_r}e_r$ where $e_r$ denotes the unit vector along the
$r$-th axis. $Y_{ijkl}$ is the inner product of the vectors $u_{ij}$ and $u_{kl}$
which is $(\sum_r w_r(P_{\sigma_r})_{ij}e_r)\cdot (\sum_s w_s(P_{\sigma_s})_{kl}e_s) 
= \sum_r w_r^2 (P_{\sigma_r})_{ij})(P_{\sigma_r})_{kl}) = 
\sum_r w_r^2(P^{[2]}_{\sigma_r})_{ij,kl}$. Thus $Y = \sum_r w_r^2 P^{[2]}_{\sigma_r}$. 
Since $\sum_rw_r^2 = w^2 =1$, $Y$ is a convex combination of some of the $P^{[2]}_{\sigma}$.
From Lemmas \ref{lem5} and \ref{lem3}
 each $\sigma_r$, with $w_r >0$, is an isomorphism between $G_1$ and $G_2$.
Since $w$ is a unit vector, $w_r>0$ for at least one $r$. Hence $G_1$ and $G_2$ are
isomorphic. Conversely from Lemma \ref{lem3} if $\sigma$ is an isomorphism, then $P^{[2]}_{\sigma}$
is a solution and its objective function value is $n$.

When the $cp\vartheta$ function has value $n$, the above discussion implies that
 the feasible region is contained in ${\cal P}_{G_1G_2}$.
Conversely, from Lemma \ref{lem3} and the fact that convex combination of CP solutions is also a CP solution, we deduce that ${\cal P}_{G_1G_2}$ is contained in the
feasible region. $\Box$
\end{proof}

\section{The Case of Non-Isomorphic Graphs}

\begin{lemma}\label{lem7} Let $Y$ be a solution of CP-LT. If $\sum_{j}Y_{ij,ij} \geq
1 - 1/(4n^4)$ for each $i$, then $G_1$ and $G_2$ are isomorphic.
\end{lemma}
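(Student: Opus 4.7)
My plan is to take an $N$-dimensional non-negative vector realization $\{u_{ij}\}\cup\{w\}$ of the CP solution $Y$, as in the proof of Lemma~\ref{lem6}, and argue that the row/column orthogonality combined with the united-vector constraint forces at least one coordinate-slice to be a full permutation matrix. That permutation will then be a $G_1,G_2$ isomorphism by the edge conditions.

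First I would let $D_r$ denote the $n\times n$ matrix $((u_{ij})_r)_{i,j}$ and set $w_r := (w)_r$. Row orthogonality (5) gives $u_{ij}\cdot u_{ik}=0$ for $j\neq k$, and since each summand $(D_r)_{ij}(D_r)_{ik}$ is non-negative, the sum collapses to the pointwise identity $(D_r)_{ij}(D_r)_{ik}=0$ for every $r$. Together with the column condition (6), each $D_r$ has at most one non-zero entry per row and per column, so its support is the graph of a partial bijection $\sigma_r : A_r \to B_r$ with $A_r,B_r \subseteq [n]$. The edge conditions (7) and (8), treated identically, further force $\sigma_r$ to be a \emph{partial isomorphism} between the induced subgraphs on $A_r$ and $B_r$. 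Hence any $r$ with $A_r=[n]$ automatically gives a full $G_1,G_2$ isomorphism, and it is enough to produce at least one such $r$.

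Next I would use the united-vector condition $u_{ij}\cdot w = u_{ij}^2$. Writing $a_{r,i} := (D_r)_{i,\sigma_r(i)}$ for $i\in A_r$, this condition reads $\sum_r (D_r)_{ij}\bigl(w_r-(D_r)_{ij}\bigr)=0$ for every $(i,j)$. Summing over $j$, and noting that only the single column $\sigma_r(i)$ contributes for each $r$, I get the scalar identity $\sum_{r:\,i\in A_r} a_{r,i}\,w_r \;=\; \sum_{r:\,i\in A_r} a_{r,i}^2 \;=\; \sum_j u_{ij}^2$, which by hypothesis is at least $1-1/(4n^4)$. Applying Cauchy-Schwarz to the non-negative tuples $(a_{r,i})$ and $(w_r)$ restricted to those $r$ with $i\in A_r$ then yields $\sum_{r:\,i\in A_r} w_r^2 \geq \sum_{r:\,i\in A_r} a_{r,i}^2 \geq 1-1/(4n^4)$.

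Finally, $\sum_r w_r^2 = w^2 = 1$ by condition (4), so $\sum_{r:\,i\notin A_r} w_r^2 \leq 1/(4n^4)$ for every $i\in[n]$. Summing these $n$ inequalities rearranges to $\sum_r (n-|A_r|)\,w_r^2 \leq 1/(4n^3)$. Every partial component ($|A_r|<n$) contributes at least $w_r^2$ to this sum, so the total $w_r^2$-mass carried by partial components is at most $1/(4n^3) < 1$; consequently some $r$ satisfies $|A_r|=n$ with $w_r>0$, and the corresponding $\sigma_r$ is a full $G_1,G_2$ isomorphism. The delicate point is the very first step --- using non-negativity to collapse the bilinear SDP equalities into pointwise vanishing of products of coordinates; that is what yields the partial-isomorphism structure on each $D_r$. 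Once this is in hand, the remainder is a short Cauchy-Schwarz plus counting argument driven entirely by the row-sum hypothesis.
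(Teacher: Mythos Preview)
Your argument is correct, and it takes a genuinely different route from the paper's. The paper first invokes the cp-rank bound of Theorem~\ref{thm1} to get $N<n^4$, then singles out the coordinate $e_1$ with the largest $w$-component (so $w\cdot e_1\ge 1/\sqrt{N}$) and shows, via a direct norm estimate, that in this one coordinate each row $S_i$ has a unique non-orthogonal representative $u_{ij_i}$ with $u_{ij_i}\cdot e_1\ge 1/(4N)$; these $n$ vectors then form a complete consistent set. Your approach instead analyses \emph{all} coordinate slices $D_r$ at once, extracts the partial-bijection structure on each $D_r$ from non-negativity plus orthogonality, and then uses the identity $\sum_{r:i\in A_r}a_{r,i}w_r=\sum_{r:i\in A_r}a_{r,i}^2$ together with Cauchy--Schwarz to bound the $w_r^2$-mass on slices missing row $i$. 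The averaging step $\sum_r (n-|A_r|)w_r^2\le n/(4n^4)<1$ then forces a full slice. Two advantages of your route: it never uses the cp-rank bound (Theorem~\ref{thm1} is unnecessary), and it actually proves a sharper statement --- your argument only needs $\sum_j Y_{ij,ij}>1-1/n$ rather than $1-1/(4n^4)$. The paper's approach, by contrast, is more concrete in that it explicitly exhibits the isomorphism coordinate and gives an explicit positive lower bound $1/(16N^2)$ on the pairwise inner products of the consistent set.
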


\begin{proof}
Let $N$ denote the cp-rank of $Y$. So we have a non-negative vector realization of $Y$,
$\{u_{ij}| i,j \in [n]\}\cup \{w\}$ in an $N$-dimensional space. So we have
$u_{ij}\cdot u_{kl} = Y_{ij,kl}$ for all $i,j,k,l$ and there is an orthonormal basis
of this space, $B = \{e_p| p\in [N]\}$, such that every $u_{ij}$ belongs to
the closed positive orthant of this basis.

From Theorem \ref{thm1} $N < n^4$ because the rank of $Y$ is at most $n^2$.
Hence from the statement of this lemma $w\cdot \sum_ju_{ij} = \sum_ju_{ij}^2 = \sum_{ij}Y_{ij,ij}= 
 > 1 - 1/(4N)$ for each $i$.  
Let $S_i=\{u_{i1},\dots,u_{in}\}$ for each $i$. Since it is a set of
orthogonal united vectors with respect to $w$, $w.\sum_j u_{ij} =
\sum_j u_{ij}^2 \leq 1$. 

Without loss of generality assume that $w\cdot e_1\geq
w\cdot e_j$ for all $j$. So $w\cdot e_1 \geq 1/\sqrt{N}$ because $w$
is a unit vector.
If every vector in $S_i$ is perpendicular to $e_1$, then $w \cdot \sum_ju_{ij}$
can be at most $|w-(w\cdot e_1)e_1|$, which is at most
$(1-(w\cdot e_1)^2)^{1/2} \leq (1-1/N)^{1/2} \leq 1-1/(2N)$,
contrary to the given fact. So there exists a vector $u_{ij_i}\in S_i$
such that $u_{ij_i}\cdot e_1>0$. Let there be a $k\neq j_i$ such
that $u_{ik}\cdot e_1>0$. As all vectors are in the closed positive orthant,
$u_{ij_i}\cdot u_{ik}\geq (u_{ij_i}\cdot e_1)(u_{ik}\cdot e_1) > 0$.
This contradicts the fact that the vectors of $S_i$ are pairwise orthogonal.
Hence we conclude that for each $i$ there exists a unique vector
$u_{ij_i}\in S_i$ such that $u_{ij_i}\cdot e_1>0$. Thus $\sum_{j=1}^n
u_{ij}\cdot e_1 = u_{ij_i}\cdot e_1$.

Next we will show that $u_{ij_{i}}\cdot u_{kj_{k}} \geq 1/(16N^2)$ for
all $i,k\in [n]$. For any $i$, from the given facts $1-1/(4N) < w\cdot
\sum_ju_{ij} = (w\cdot e_1)(\sum_ju_{ij}\cdot e_1) +
(w-(w\cdot e_1)e_1)\cdot (\sum_ju_{ij}-(\sum_ju_{ij}\cdot e_1)e_1)$.
Since $(\sum_ju_{ij})^2 \leq 1$, $(w-(w\cdot e_1)e_1)\cdot (\sum_ju_{ij}-
(\sum_ju_{ij}\cdot e_1)e_1) \leq |(w-(w\cdot e_1)e_1)| \leq
1-1/(2N)$. The last inequality has been established in the previous paragraph.
So $(w\cdot e_1)(\sum_ju_{ij}\cdot e_1)\geq 1/(4N)$ for all $i$.
Hence $u_{ij_i}\cdot e_1 = \sum_ju_{ij}\cdot e_1  \geq 1/(4N)$.

All vectors of each $S_i$ are in the closed positive orthant hence
$u_{ij_{i}}\cdot u_{kj_{k}} \geq (u_{ij_{i}}\cdot e_1)(u_{kj_{k}}
\cdot e_1) \geq 1/(16N^2)$ for all $i,k\in [n]$.
Thus the set $\{u_{1j_1},\dots,u_{nj_n}\}$ is pairwise non-orthogonal, and hence
a complete consistent set. From the first paragraph of Section 4 we know that
 the permutation, $\sigma(i)=j_i$ for all $i$, is an isomorphism between
$G_1$ and $G_2$. $\Box$
\end{proof}

\begin{corollary} If $G_1$ and $G_2$ are non-isomorphic, then the value of the
$cp\vartheta$-function of CP-LT must be less than $n- 1/(4n^4)$.
\end{corollary}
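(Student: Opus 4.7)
The plan is to derive the corollary as the contrapositive of Lemma~\ref{lem7} combined with the row-sum bound already established in Section~3. Fix any feasible $Y$ of CP-LT together with a non-negative vector realization $\{u_{ij}\}\cup\{w\}$. For each fixed $i$, condition (5) together with Lemma~\ref{lem1} tells us that $u_{i1},\dots,u_{in}$ are pairwise orthogonal united vectors with respect to $w$, so by Observation~\ref{obs1}(iv) one has $\sum_j Y_{ij,ij} = \sum_j u_{ij}^2 \le 1$. This upper-bounds each of the $n$ row-sums of diagonal entries that make up the objective.

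Now I would argue by contradiction. Suppose $G_1$ and $G_2$ are non-isomorphic but there exists a feasible $Y$ whose objective value is at least $n-1/(4n^4)$. Since the $n$ row-sums are each at most $1$, if any one of them were strictly smaller than $1-1/(4n^4)$, then the total would be strictly below $(n-1)\cdot 1 + (1-1/(4n^4)) = n-1/(4n^4)$, contradicting the assumed optimum. Hence every row-sum satisfies $\sum_j Y_{ij,ij}\ge 1-1/(4n^4)$. But that is precisely the hypothesis of Lemma~\ref{lem7}, which forces $G_1$ and $G_2$ to be isomorphic---contradiction. Therefore the value of the $cp\vartheta$-function must be strictly less than $n-1/(4n^4)$ whenever $G_1\not\cong G_2$.

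I do not expect any real obstacle: once Lemma~\ref{lem7} is in hand, the corollary is essentially a direct restatement via contrapositive plus a two-line bookkeeping step that converts a per-row slack into an objective-level slack. The only subtlety worth flagging explicitly is the strictness of the inequality; this follows because Lemma~\ref{lem7} uses the closed hypothesis $\ge 1-1/(4n^4)$, so its negation yields a strict shortfall in at least one row, and the remaining rows contribute at most $1$ each. No compactness or continuity argument is actually required, since the strict bound holds pointwise at every feasible $Y$ in the non-isomorphic case, hence uniformly for the supremum defining $cp\vartheta$.
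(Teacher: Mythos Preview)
Your argument is correct and is exactly what the paper intends; the corollary is stated there without proof, as an immediate consequence of Lemma~\ref{lem7} together with the per-row bound $\sum_j Y_{ij,ij}\le 1$. One minor correction: your last sentence asserts that a pointwise strict upper bound on the objective passes automatically to the supremum defining $cp\vartheta$, but that inference is not valid in general; what actually justifies the strict inequality is that the feasible region of CP-LT is compact (the CP cone is closed, and conditions (3)--(4) force every diagonal entry---hence every entry---of $Y$ into $[0,1]$), so the optimum is attained and your contradiction argument can be applied directly to an optimizer.
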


\section{Conclusion}

We have seen that if $G_1$ and $G_2$ are isomorphic, then the $cp\vartheta$ function value
 is $n$. If the graphs are not isomorphic, then the function value
remains less than $n-1/(4n^4)$. Moreover in isomorphic case the feasible region is 
exactly ${\cal P}_{G_1G_2}$. 
Since a completely positive program cannot be solved in polynomial time, the above is not 
a polynomial time test. We have the following theorem.

\begin{theorem} The proposed completely positive $\vartheta$ function takes value $n$ when the graphs are 
isomorphic
and it takes value less than $n-1/(4n^4)$ when the graphs are non-isomorphic. This gives a test for graph isomorphism.
\end{theorem}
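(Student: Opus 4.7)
The plan is to assemble the theorem from the two directions already proved earlier in the paper: the isomorphic case follows from Lemmas~\ref{lem3}, \ref{lem4}, and \ref{lem6}, while the non-isomorphic case follows from Lemma~\ref{lem7} together with a short pigeonhole step.

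For the isomorphic direction, I would invoke Lemma~\ref{lem3} and Observation~\ref{obs2}. Given any isomorphism $\sigma$ between $G_1$ and $G_2$, the matrix $P^{[2]}_\sigma$ is a rank-one CP matrix satisfying every constraint of CP-LT, and its objective value equals $\sum_i [P^{[2]}_\sigma]_{i\sigma(i),i\sigma(i)} = n$. Combined with the upper bound $\sum_{i,j} Y_{ij,ij} \leq n$ derived at the end of Section~3 (each row of the matrix $W$ is a collection of pairwise orthogonal united vectors, so Observation~\ref{obs1}(iv) gives $\sum_j u_{ij}^2 \leq 1$), this shows the $cp\vartheta$ value is exactly $n$ whenever the graphs are isomorphic, as also recorded in Lemma~\ref{lem4}.

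For the non-isomorphic direction, I would argue the contrapositive via Lemma~\ref{lem7}. Suppose $Y$ is a CP solution of CP-LT with $\sum_{i,j} Y_{ij,ij} \geq n - 1/(4n^4)$. Writing $a_i = \sum_j Y_{ij,ij}$, Observation~\ref{obs1}(iv) again gives $a_i \leq 1$ for each $i$, since $\{u_{i1},\dots,u_{in}\}$ is a pairwise orthogonal set of united vectors with respect to $w$. A one-line pigeonhole bound then yields, for every $k$, the inequality $a_k = \sum_i a_i - \sum_{i\neq k} a_i \geq (n - 1/(4n^4)) - (n-1) = 1 - 1/(4n^4)$. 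Lemma~\ref{lem7} now forces $G_1$ and $G_2$ to be isomorphic, contradicting our assumption and establishing the claimed gap.

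The real work was done in Lemma~\ref{lem7}, where the geometric analysis of united vectors lying in the closed positive orthant together with the cp-rank bound from Theorem~\ref{thm1} produced the quantitative threshold $1/(4n^4)$. For the present theorem the only new ingredient is the pigeonhole reduction from a global objective bound to a per-row bound, which bridges the averaged statement of the objective with the row-wise hypothesis of Lemma~\ref{lem7}. Combining the two directions immediately gives a gap of at least $1/(4n^4)$ between the isomorphic value $n$ and any value attained in the non-isomorphic case, which is exactly the desired test for graph isomorphism.
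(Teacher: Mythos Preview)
Your proposal is correct and follows essentially the same approach as the paper: the theorem is just the assembly of Lemma~\ref{lem6} (isomorphic case) and the Corollary to Lemma~\ref{lem7} (non-isomorphic case), and you supply exactly that. The one small addition you make explicit---the pigeonhole step showing that a global objective $\geq n - 1/(4n^4)$ together with the per-row bound $a_i \leq 1$ forces every $a_i \geq 1 - 1/(4n^4)$---is precisely the bridge the paper leaves implicit when passing from Lemma~\ref{lem7} to its Corollary.
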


\bibliographystyle{plain}
\bibliography{reference}

\end{document}